\documentclass[11pt]{amsart}
\usepackage{color}
\usepackage{amsmath,amsthm,amssymb,mathrsfs, comment}
\usepackage{amsfonts, accents}
\usepackage{times,ulem}
\usepackage{slashed}
\textwidth=16cm
\textheight=23cm
\voffset=-1.8cm
\hoffset=-1.1cm
\theoremstyle{plain}
\newtheorem{theorem}{Theorem}[section]
\newtheorem{lemma}[theorem]{Lemma}
\newtheorem{lem}[theorem]{Lemma}
\newtheorem{prop}[theorem]{Proposition}

\theoremstyle{definition}

\newtheorem{defn}[theorem]{Definition}
\theoremstyle{remark}
\newtheorem*{remark}{Remark}

\newcommand{\pl}{\partial}
\newcommand{\na}{\nabla}

\newcommand{\lt}{\left}
\newcommand{\rt}{\right}
\newcommand{\rw}{\rightarrow}

\newcommand{\tr}{\mbox{tr}}
\newcommand{\R}{\mathbb{R}}

\title{Quasi-local mass at null infinity in Bondi-Sachs coordinates}
\author{Po-Ning Chen, Mu-Tao Wang, Ye-Kai Wang, and Shing-Tung Yau}

\begin{document}

\begin{abstract}
There are two important statements regarding the Trautman-Bondi mass \cite{BVM, Sachs, Van, Trautman1, Trautman2} at null infinity: one is the positivity \cite{SY, HP}, and the other is the Bondi mass loss formula \cite{BVM}, which are both global in nature. In this note, we compute the limit of the Wang-Yau quasi-local mass on unit spheres at null infinity of an asymptotically flat spacetime in the Bondi-Sachs coordinates. The quasi-local mass leads to a local description of the radiation that is purely gravitational at null infinity. In particular, the quasi-local mass is evaluated in terms of the news function of the Bondi-Sachs coordinates. 

\end{abstract}

\thanks{P.-N. Chen is supported by NSF grant DMS-1308164 and Simons Foundation collaboration grant \#584785, M.-T. Wang is supported by NSF grant DMS-1405152 and DMS-1810856, Y.-K. Wang is supported by MOST Taiwan grant 105-2115-M-006-016-MY2, 107-2115-M-006-001-MY2, and S.-T. Yau is supported by NSF grants  PHY-0714648 and DMS-1308244. The authors would like to thank the National Center for Theoretical Sciences at National Taiwan University where part of this research was carried out}

\maketitle

\section{Introduction}

An observer of the gravitational radiation created by an astronomical event is situated at future null infinity, where light rays emitted from the source approach. The study of the theory of gravitational radiation at null infinity in the last century culminated in a series of papers by Bondi and his collaborators \cite{BVM, Sachs, Van, Trautman1, Trautman2}, in which the Bondi-Trautman mass and the mass loss formula at null infinity are well understood. In particular, the Bondi-Trautman mass was proved to be positive in the work of Schoen-Yau \cite{SY} and Horowitz-Perry \cite{HP}. Both the positivity of mass and the mass loss formula are global statements on null infinity:  knowledge of the mass aspect is required in {\it every} direction. For reasons that are both theoretical and experimental, it is highly desirable to have a quasi-local statement of mass/radiation at null infinity.

In \cite{Chen-Wang-Yau6,Chen-Wang-Yau4}, we embarked on the evaluation the Wang-Yau quasi-local mass on surfaces of fixed size near null infinity of a linear gravitational perturbation of the Schwarzschild spacetime. The ideas and technique in \cite{Chen-Wang-Yau6,Chen-Wang-Yau4} were further developed to address the case of the Vaidya spacetime in \cite{Vaidya}. The construction of these spheres of unit size at null infinity will be reviewed in the next section. In the Vaidya case, we proved in \cite{Vaidya} that the quasi-local mass of a unit size sphere at null infinity is directly related to the derivative of the mass aspect function with respect to the retarded time $u$. In particular, the positivity of the quasi-local mass is implied by the decreasing of the mass aspect function in $u$. In this article, we take on the general case of an asymptotically flat spacetime described in the Bondi-Sachs coordinates. The Vaidya spacetime contains matter which contributes to the radiation. A general vacuum spacetime in the Bondi-Sachs coordinates allows us to investigate radiation that is purely gravitational.

A new ingredient in this article is a variational formula (see Theorem \ref{energy}) which facilitates a much more straightforward computation of the $O(d^{-2})$ than the one in \cite{Vaidya}. Similar to \cite{Vaidya}, it is still crucial to compute the $O(d^{-1})$ term of the optimal embedding. This is done in Lemma 5.1 and Lemma 5.2 of the current article.  As in Lemma 3.3 of \cite{Vaidya}, the optimal embedding equation is reduced to two ordinary differential equations. However, it does not seem possible to obtain explicit solutions to the ODE's as in the Vaidya case. The quasi-local mass is then evaluated by combining Theorem \ref{energy} and the optimal embedding.

The structure of the paper is as follows: in Section 2, we review the general framework of the quasi-local mass at null infinity. In Section 3, we compute the geometric quantities on the spheres at null infinity that are necessary to evaluate the quasi-local mass. In Section 4, we derived the formula for the leading order term of the quasi-local mass. In Section 5, we evaluate the quasi-local mass based on the formula derived in Section 4. See Theorem \ref{main_theorem}. In the last section, Section 6, we look at several special examples.

\section{General framework of quasilocal mass at null infinity}

We consider a null geodesic $\gamma$ parametrized by an affine parameter $d$ with $d_0\leq d<\infty$ and a family of surfaces $\Sigma_d (s)$ for $s>0$ centered at $\gamma(d)$ in the following sense. For each fixed $d$ and $s$, $\Sigma_d (s)$ is a surface that bounds a ball $B_d (s)$ with $\partial B_d(s)=\Sigma_d (s)$,  such that as $s\rightarrow 0$, we have $\lim_{s\rightarrow 0} B_d (s)=\lim_{s\rightarrow 0} \Sigma_d (s)=\gamma(d)$. We evaluate 
the quasilocal mass of $\Sigma_d(s)$ as $d\rightarrow \infty$. In particular, when $s=1$, $\lim_{d\rightarrow \infty} \Sigma_d(1)$ is the unit sphere limit referred on our previous work. 

In practice, such an evaluation is conducted by choosing a family of parametrizations $\mathfrak F_{d}$ from the unit ball $B^3$, $\mathfrak F_{d} : B^3\rightarrow  B_d(1)$ and considering the pull-backs of geometric quantities on $B_d(1)$ as geometric quantities on $B^3$ that depend on the parameter $d$. In particular, $\Sigma_d (s)$ is the image of the sphere of radius $s$ in $B^3$ under $\mathfrak F_{d}$. The unit sphere limit is obtained by setting $s=1$ and taking the limit as $d\rightarrow \infty$. 

When the spacetime is equipped with a global structure at null infinity that corresponds to limits of null geodesics, these unit sphere limits provide information of gravitational radiation observed at null infinity. We illustrate the construction in the Vaidya case where the spacetime metric takes the simple form:
\[- \lt( 1 - \frac{M(u)}{r}\rt) du^2 - 2  dudr  + r^2  d\theta^2 + r^2  \sin^2\theta d\phi^2.\]
 
We first consider a global coordinate change from $(u, r, \theta, \phi)$ to $(t, y^1, y^2, y^3)$ with $t=u+r, y^1= r\sin\theta\sin \phi, y^2= r\sin\theta \cos \phi,$ and $y^3= r\cos \theta$.
In terms of the new coordinate system $(t, y^1, y^2, y^3)$, the parametrization is then given by 
\[\mathfrak F_{d}=(s,\hat{\theta}, \hat{\phi})\rightarrow (t, y^1, y^2, y^3)=(d, d\tilde{d}_1+s\sin\hat{\theta} \sin \hat{\phi}, d \tilde{d}_2+ s\sin\hat{\theta} \cos \hat{\phi}, d\tilde{d}_3+s\cos\hat{\theta}),
\] where $(s,\hat{\theta}, \hat{\phi})$ is a coordinate system on $B^3$ and the constants $(\tilde{d}_1, \tilde{d}_2, \tilde{d_3})$ satisfies $\tilde{d}_1^2 +\tilde{d}_2^2+\tilde{d}_3^2=1$ and indicates the direction of the null geodesic which is parametrized by $d\mapsto (d, d\tilde{d}_1, d \tilde{d}_2, d\tilde{d}_3)$.
Along the ball centered at a point on the null geodesic in the direction of $(\tilde{d}_1, \tilde{d}_2, \tilde{d}_3)$, we have \[\begin{split} r&=\sqrt{d^2+2sd Z+s^2},\\ u&=d-  \sqrt{d^2+2sd Z+s^2},\\ \frac{y^1}{r}&= \frac{ d\tilde{d}_1+s\sin\hat{\theta} \sin \hat{\phi} }{ \sqrt{d^2+2sd Z+s^2} }, etc. \end{split}\]
where  \[Z=\tilde{d}_1 \sin\hat{\theta} \sin \hat{\phi}+  \tilde{d}_2\sin\hat{\theta} \cos \hat{\phi}+\tilde{d}_3\cos\hat{\theta}. \]

The pull-back of the global coordinate $(u, r, \theta, \phi)$  under $\mathfrak F_{d}$ defines functions on $B^3$ depending on $d$. As $d\rightarrow \infty$ we have  
\begin{equation}\label{limit_coordinate_bondi}
\lim_{d\rightarrow \infty} \mathfrak F_{d}^* u= -sZ, \lim_{d\rightarrow \infty}\mathfrak F_{d}^* \theta=\tilde{\theta}, \lim_{d\rightarrow\infty} \mathfrak F_{d}^*\phi=\tilde{\phi}, \end{equation} where $\tilde{\theta}, \tilde{\phi}$ are defined such that $\tilde d_1=\sin\tilde{\theta}\sin\tilde{\phi}, \tilde d_2=\sin\tilde{\theta} \cos\tilde\phi$, and  $\tilde{d}_3=\cos\tilde\theta$.

\section{Unit sphere at null infinity in Bondi-Sachs coordinates}
The spacetime metric in Bondi-Sachs coordinates is given by
\begin{align*}
- \lt( 1 - \frac{M}{r} +O(r^{-2}) \rt) du^2 - 2 \lt( 1 + O(r^{-2}) \rt) dudr
- 2 \lt( U_A^{(-2)} + O(r^{-1}) \rt) dudv^A  \\
 + ( r^2 \tilde\sigma_{AB} + r C_{AB} + O(1)) dv^A dv^B. 
\end{align*}

Substituting $u = t - r$, the metric becomes, up to lower order terms,
\begin{align*}
- \lt( 1 - \frac{M}{r} \rt) dt^2 +  \lt(  1 + \frac{M}{r} \rt) dr^2 -\frac{2M}{r} dtdr 
- 2 U_A^{(-2)} (dt -dr)dv^A + ( r^2 \tilde\sigma_{AB} + r C_{AB} ) dv^A dv^B.
\end{align*}

The unit timelike normal of $t=d$ slice is given by
\begin{align*}
\vec{n} = \lt( 1 + \frac{M}{r} \rt) \pl_t + \frac{M}{r} \pl_r + \frac{U_A^{(-2)}}{r} \frac{\pl_A}{r} + O(r^{-2}).
\end{align*}
We compute
\begin{align*}
\langle \na_{\pl_r} \pl_r, \pl_t \rangle &= -\frac{1}{2}\frac{M_u}{r} + O(r^{-2}),\\
\langle \na_{\pl_A} \pl_B, \pl_t \rangle &= - \frac{r}{2} (C_{AB})_u + O(1),
\end{align*}
to get the second fundamental form of $t=d$ slice \begin{align}\label{slice 2nd ff}
\begin{split} k_{rr} &= \frac{1}{2} \frac{M_u}{r} + O(r^{-2})\\
k_{AB} &= \frac{r}{2}(C_{AB})_u + O(1).
\end{split} 
\end{align}

A null geodesic with $u=0, \theta=\tilde \theta, \phi=\tilde \phi$ corresponds to points with the new coordinates \[(t, y^1, y^2, y^3)=(d, d\tilde d_1, d\tilde d_2, d\tilde d_3).\] Let $d_i=d\tilde d_i$. We consider the sphere $\Sigma_d$ of (Euclidean) radius $1$ centered at a point $ (d, d_1, d_2, d_3) $ on the null geodesic and the ball $B_d$ bounded by $\Sigma_d$ in $t$-slice.  Namely,
\begin{align}\label{unit_sphere}
\Sigma_{d}&= \{(t, y^1, y^2, y^3)| \,\,t=d, \sum_i (y^i-d_i)^2 =1\},\\
\Sigma_d(s)&= \{(t, y^1, y^2, y^3)| \,\,t=d, \sum_i (y^i-d_i)^2 =s^2\},\\
B_d &= \{(t, y^1, y^2, y^3)| \,\,t=d, \sum_i (y^i-d_i)^2 \le 1\}.
 \end{align}

In this article, we study the Wang-Yau quasi-local mass of the family of surfaces $\Sigma_d$ defined in \eqref{unit_sphere} as $d\rightarrow \infty$ using the frame work outlined in Section 2. Namely, we consider a family of embedding 
\[
\mathfrak F_{d}=(s,\hat{\theta}, \hat{\phi})\rightarrow (t, y^1, y^2, y^3)=(d, d\tilde{d}_1+s\sin\hat{\theta} \sin \hat{\phi}, d \tilde{d}_2+ s\sin\hat{\theta} \cos \hat{\phi}, d\tilde{d}_3+s\cos\hat{\theta}).
\] 
In particular, $\mathfrak F_{d}$ maps the sphere of radius $s$, $\Sigma(s)$ in $B^3$ onto $\Sigma_d (s)$. The pull-backs of $M$, $U_A^{(-2)} $ and $C_{AB}$ under $\mathfrak F_{d}$ defines tensors on $B^3$ depending on $d$. By \eqref{limit_coordinate_bondi}, their limits as $d \rw \infty$ depend only on $sZ$. We define the following:
\begin{defn} 
We define $F(x)$, $P_{AB}(x)$ and $Q_A(x)$ to be functions of a single variable $x$ such that
\begin{align*}
F(sZ) &  =\lim_{d \rw \infty} M \\
P_{AB}(sZ) &= \lim_{d\rw\infty} C_{AB}\\
Q_A(s Z) &= \lim_{d\rw\infty} U_A^{(-2)}.
\end{align*}
\end{defn}
We use $F'$, $P'_{AB}$ and $Q'_A$ to denote the derivative of these functions with respect to $x$.

We consider the following two functions
$
(\cos\tilde\theta \cos\tilde\phi) \sin\hat\theta \cos\hat \phi + (\cos\tilde\theta \sin\tilde\phi) \sin\hat\theta \sin\hat\phi - \sin\tilde\theta \cos\hat\theta$ and $
 -\sin\tilde\phi \sin\hat\theta \cos\hat\phi + \cos\tilde\phi \sin\hat\theta \sin\hat\phi $. Together with $Z= \sin\tilde{\theta}\sin\tilde{\phi}\sin\hat{\theta} \sin \hat{\phi}+  \sin\tilde{\theta} \cos\tilde\phi\sin\hat{\theta} \cos \hat{\phi}+\cos\tilde\theta\cos\hat{\theta}$, they form an orthogonal basis of first eigenfunctions on $S^2$. We refer to these two functions as $Z^A$. 

In terms of $Z$ and $Z^A$, the transformation formula \cite[page 3]{Vaidya} gives
\begin{align}\label{drdvA}
\begin{split}
dr &=  Z ds + s Z_b du^b + O(d^{-1}) \\
dv^A &= (\frac{1}{r} Z^A) ds + (\frac{s}{r} Z^A_b) du^b + O(d^{-2}).
\end{split}
\end{align}
Let $\bar g$ be the pull-back of the metric on the hypersurface $t=d$  by $\mathfrak F_d$.  
In terms of the coordinate system $\{ s, u^a \}$ on $B^3$, we have
\begin{equation} \label{new_metric}
\begin{split}
\bar g_{ss}= & 1+ \frac{1}{d} \lt( F(sZ) Z^2 + 2Q_A(sZ) Z Z^A + P_{AB}(sZ) Z^A Z^B \right) + O(\frac{1}{d^2}) \\
\bar g_{sa}=& \frac{s}{d} \left( F(sZ)Z Z_a + Q_A(sZ) (Z Z^A_a + Z_a Z^A) +  P_{AB}(sZ) Z^A_a Z^B  \right) + O(\frac{1}{d^2}) \\
\bar g_{{a}{b}}= &s^2 \tilde \sigma_{ab} +  \frac{s^2}{d} \left(  F(sZ) Z_a Z_b + Q_A(sZ) (Z_a Z^A_b + Z_b Z^A_a) + P_{AB}(sZ) Z^A_a Z^B_b \right)  + O(\frac{1}{d^2}) .
\end{split}
\end{equation}
We first compute geometric data on $\Sigma_d(s)$.
\begin{lem} On $\Sigma_d(s)$,
\begin{align*}
\sigma_{ab}^{(-1)} &= s^2 \Big[ F(sZ) Z_a Z_b  + Q_A(sZ)  (Z_a Z^A_b + Z_b Z^A_a ) + P_{AB}(sZ)  Z^A_a Z^B_b \Big]\\
\frac{1}{2} \lt[ \tilde\na^a \tilde\na^b \sigma^{(-1)}_{ab} - tr \sigma^{(-1)} - \tilde\Delta (tr \sigma^{(-1)})\rt] &= s^2 \Big[ -\frac{1}{2}sF'(sZ) Z(1-Z^2)-F(sZ) (1-2Z^2)\\
&\qquad + \lt( sQ'_A(sZ) Z^2 - sQ_A'(sZ)  + 4Q_A (sZ) Z \rt)Z^A \\
&\qquad + \lt( s^2P_{AB}'' (sZ) + sP'_{AB}(sZ) Z + 4P_{AB}(sZ) \rt)Z^A Z^B \Big]
\end{align*}
\end{lem}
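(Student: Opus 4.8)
The first identity requires essentially no work. Since $\Sigma_d(s)$ is the level set where $s$ is constant in the coordinate system $\{s,u^a\}$ on $B^3$, its induced metric is the restriction of $\bar g_{ab}$ in \eqref{new_metric} to that level set, so I would simply read off $\sigma^{(-1)}_{ab}$ as the coefficient of $d^{-1}$ in $\bar g_{ab}$, which is exactly $s^2[F(sZ)Z_aZ_b + Q_A(sZ)(Z_aZ^A_b+Z_bZ^A_a)+P_{AB}(sZ)Z^A_aZ^B_b]$.

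For the second identity my plan is to exploit the linearity of the operator $\tfrac12[\tilde\na^a\tilde\na^b(\cdot)-\tr(\cdot)-\tilde\Delta\,\tr(\cdot)]$ and to treat the three tensor types making up $\sigma^{(-1)}$ separately: $FZ_aZ_b$, $Q_A(Z_aZ^A_b+Z_bZ^A_a)$, and $P_{AB}Z^A_aZ^B_b$. For each type I would compute in turn (i) the trace against $\tilde\sigma^{ab}$; (ii) the single divergence $\tilde\na^b(\cdot)_{ab}$, simplify it, and then take the outer divergence $\tilde\na^a\tilde\na^b(\cdot)_{ab}$; and (iii) the Laplacian $\tilde\Delta$ of the scalar produced in (i). Assembling the three scalars and multiplying by $\tfrac12$ yields the contribution of that tensor type to the right-hand side, and summing the three contributions gives the stated formula.

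The whole computation rests on the standard identities for the first eigenfunctions $Z,Z^A$ on the round $S^2$, which hold because $Z$ and $Z^A$ are restrictions of linear coordinate functions: the Hessian relations $\tilde\na_a\tilde\na_b Z=-Z\tilde\sigma_{ab}$ and $\tilde\na_a\tilde\na_b Z^A=-Z^A\tilde\sigma_{ab}$ (equivalently $\tilde\Delta Z=-2Z$, $\tilde\Delta Z^A=-2Z^A$), together with the gradient inner products $\tilde\na_a Z\,\tilde\na^a Z=1-Z^2$, $\tilde\na_a Z\,\tilde\na^a Z^A=-ZZ^A$, and $\tilde\na_a Z^A\,\tilde\na^a Z^B=\delta^{AB}-Z^AZ^B$; this is where the curvature of $S^2$ enters. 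Because $F,Q_A,P_{AB}$ are evaluated at $sZ$, the chain rule gives $\tilde\na_a[F(sZ)]=sF'Z_a$ and $\tilde\na_a\tilde\na_b[F(sZ)]=s^2F''Z_aZ_b-sF'Z\tilde\sigma_{ab}$, and analogously for $Q_A$ and $P_{AB}$; these are the source of all the $F',F''$, $Q_A',Q_A''$, and $P_{AB}',P_{AB}''$ terms. Repeated Leibniz expansion, together with the product rule for $\tilde\Delta$ applied to the quadratic scalars $ZZ^A$ and $Z^AZ^B$, reduces every term to a polynomial in $Z,Z^A,Z^B$ whose coefficients are built from these functions and their derivatives.

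The main obstacle will be the $P_{AB}$ term. Unlike the $F$ term, where the $\tilde\na^a\tilde\na^b$ and $\tilde\Delta\,\tr$ pieces conspire to cancel the leading $F''$ contribution, for $P_{AB}$ the corresponding second-derivative piece survives and is precisely what produces the $s^2P_{AB}''Z^AZ^B$ term; so I must track the cancellation pattern term by term rather than quote it from the $F$ case. The second delicate point is that a naive assembly of the $P_{AB}$ contribution generates residual pure-trace terms proportional to $\delta^{AB}P_{AB}''$, $\delta^{AB}P_{AB}'$, and $\delta^{AB}P_{AB}$, which must all disappear to leave only the $Z^AZ^B$ combination stated in the lemma. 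This is exactly where the trace-free property of the Bondi shear $C_{AB}$ is used: since $\delta^{AB}P_{AB}=0$ identically in $sZ$, its derivatives $\delta^{AB}P_{AB}'$ and $\delta^{AB}P_{AB}''$ also vanish, and the spurious trace terms drop. The $Q_A$ term is intermediate: its calculation parallels the $F$ computation but brings in the mixed identity $\tilde\na Z\cdot\tilde\na Z^A=-ZZ^A$, and I would organize it by first simplifying the single divergence $\tilde\na^b[Q_A(Z_aZ^A_b+Z_bZ^A_a)]$ before taking the outer divergence, so as to keep the bookkeeping manageable.
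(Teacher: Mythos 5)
Your proposal is correct and is essentially the paper's own argument: the paper also reads $\sigma^{(-1)}_{ab}$ directly off the expansion \eqref{new_metric} and obtains the second identity by brute-force evaluation of $\tilde\na^a\tilde\na^b\sigma^{(-1)}_{ab}$, $\tr\,\sigma^{(-1)}$ and $\tilde\Delta(\tr\,\sigma^{(-1)})$ using exactly the eigenfunction identities you list (the Hessian relations, the gradient products, the chain rule in $sZ$, and tracelessness of $P_{AB}$), merely displaying the two intermediate results at $s=1$ and noting that general $s$ inserts a factor of $s$ per derivative. One bookkeeping warning: carrying out your plan (equivalently, combining the paper's own displayed intermediate formulas, or its Section 5 recomputation of $-\tilde\na^a\tilde\na^b\sigma^{(-1)}_{ab}+\tr\,\sigma^{(-1)}+\tilde\Delta\,\tr\,\sigma^{(-1)}$) yields $\frac{1}{2}\lt(s^2P''_{AB}+sP'_{AB}Z+4P_{AB}\rt)Z^AZ^B$ for the last line, half the coefficient printed in the lemma, so do not be alarmed if your careful computation disagrees with the stated $P_{AB}$ term by a factor of $2$.
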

\begin{remark}
In the proof, we denote functions such as $F(sZ) $, $F'(sZ) $ and $Q_A(sZ)$ by $F$, $F'$ and $Q_A$.
\end{remark}
\begin{proof}
On $\Sigma_d$, we have
\begin{align*}
\tilde\na^a \tilde\na^b \sigma^{(-1)}_{ab} 
&= F'' (1-Z^2)^2  - 7 F' Z(1-Z^2)^2 -3F(1-3Z^2) \\
&\quad + \lt[ -2Q_A''Z (1-Z^2) -6Q_A'(1-Z^2) + 8Q_A'Z^2 + 18Q_A Z \rt]Z^A \\
&\quad + \lt[ P_{AB}'' Z^2 + 7 P_{AB}'Z + 9P_{AB} \rt]Z^A Z^B\\
\tilde\Delta (tr \sigma^{(-1)}) &= F'' (1-Z^2)^2 - 6 F' Z(1-Z^2)^2+ F(6Z^2-2) \\
&\quad -2 \lt[ Q_A''Z(1-Z^2) -6Q_A'Z^2 + 2Q_A' - 6Q_A Z \rt]Z^A \\
&\quad - \lt[ P_{AB}''(1-Z^2) - 6P_{AB}'Z -6P_{AB} \rt]Z^A Z^B
\end{align*}
The computation on $\Sigma_d(s)$ is similar. We get a factor of $s$ after each derivative.
\end{proof}
\begin{lem} On $\Sigma_d$,
\begin{align*}
(\alpha_H^{(-1)})_a &= - F'ZZ_a + \frac{1}{4} F''(1-Z^2) Z_a + \frac{1}{4} P''_{AB} Z_a Z^A Z^B + \frac{1}{2} P'_{AB} Z^A_a Z^B \\
2 \tilde\na^a (\alpha_H^{(-1)})_a &= \frac{1}{2} F'''(1-Z^2)^2 - 4F''Z(1-Z^2) - 2F'(1-3Z^2)\\
&\quad + \lt( \frac{1}{2} P'''_{AB}(1-Z^2) - 4 P''_{AB} Z - 6 P'_{AB} \right) Z^A Z^B.
\end{align*}
\end{lem}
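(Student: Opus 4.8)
The plan is to obtain $\alpha_H$ directly from its definition as the connection one-form of the normal bundle of $\Sigma_d(s)$ in the mean-curvature gauge, and then to compute its divergence using the eigenfunction identities for $Z$ and $Z^A$ on the round sphere. I would first fix the orthonormal normal frame adapted to the slice $t=d$: let $e_4=\vec n$ be the future timelike unit normal of the slice written above, and let $e_3=\nu$ be the outward spacelike unit normal of $\Sigma_d(s)$ inside the slice, extracted from $\bar g$ in \eqref{new_metric}. Writing the spacetime mean curvature vector as $\vec H=\alpha\,e_3+\beta\,e_4$, the spatial part $\alpha$ is the mean curvature of $\Sigma_d(s)$ in the slice, equal to $\tfrac 2s+O(d^{-1})$, while the timelike part is $\beta=\tr_\Sigma k=O(d^{-1})$ by \eqref{slice 2nd ff}. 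Passing to the mean-curvature gauge is a hyperbolic rotation of $(e_3,e_4)$ by the boost angle $\vartheta$ with $\tanh\vartheta=\beta/\alpha$, under which the normal connection one-form changes by an exact term. Thus, up to the sign conventions for $k$ and the boost, I expect the reduction
\[
(\alpha_H)_a=-\,k(e_a,\nu)+\tna_a\!\lt(\frac{\tr_\Sigma k}{H}\rt)+O(d^{-2}),
\]
where $e_a=(\mathfrak F_d)_*\pl_{u^a}$ and $H=\tfrac 2s$ to leading order. Since \eqref{slice 2nd ff} involves only $M_u$ and $(C_{AB})_u$, and \eqref{limit_coordinate_bondi} forces $\mathfrak F_d^*u\rw -sZ$ and hence these limits to be $-F'$ and $-P'_{AB}$, this already explains why no $Q_A$ term appears.

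Next I would evaluate both pieces to order $d^{-1}$. Using \eqref{drdvA} to push $\pl_{u^a}$ forward into the Bondi-Sachs frame gives $e_a=sZ_a\,\pl_r+\tfrac sr Z^A_a\,\pl_{v^A}+\cdots$, while $\nu=\pl_r+O(d^{-1})$ with the $O(d^{-1})$ angular correction read off from $\bar g$. The delicate point is that $k_{AB}=\tfrac r2(C_{AB})_u$ is of order $d$, so it multiplies the $O(d^{-1})$ angular components of both $e_a$ and $\nu$ and still survives at order $d^{-1}$; collecting these against the $O(d^{-1})$ term $k_{rr}$ produces the $-F'ZZ_a$ and $\tfrac12 P'_{AB}Z^A_aZ^B$ contributions. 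The boost term $\tna_a(\tr_\Sigma k/H)$ reduces to $\tfrac s2\,\tna_a(\tr_\Sigma k)$, and differentiating the profiles $F'(sZ)$ and $P'_{AB}(sZ)$ along the sphere via $\tna_aZ=Z_a$ generates exactly the $\tfrac14 F''(1-Z^2)Z_a$ and $\tfrac14 P''_{AB}Z_aZ^AZ^B$ terms, completing the formula for $(\alpha_H^{(-1)})_a$.

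Finally, for $2\tna^a(\alpha_H^{(-1)})_a$ I would work on the round unit sphere, where $\tna$ and $\tilde\Delta$ are the standard operators, and apply repeatedly the identities $\tna^aZ_a=\tilde\Delta Z=-2Z$ and $Z^aZ_a=1-Z^2$, together with the analogous relations for $Z^A_a$ and the orthogonality of $Z,Z^A$. For example $2\tna^a(-F'ZZ_a)=-2F''Z(1-Z^2)-2F'(1-3Z^2)$ and $2\tna^a(\tfrac14 F''(1-Z^2)Z_a)=\tfrac12 F'''(1-Z^2)^2-2F''Z(1-Z^2)$, whose sum reproduces the $F$-part of the claimed divergence; the $P_{AB}$ terms are handled identically, and one differentiates one further order to account for the $F'''$ and $P'''_{AB}$ appearing in the statement.

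The main obstacle is the bookkeeping in the middle step: because $k_{AB}$ is of order $d$, it conspires with the order $d^{-1}$ angular corrections of $\nu$ and $e_a$ so that several distinct contributions land at the same order $d^{-1}$ and must be combined correctly, and one must verify that the boost (exact) term produces precisely the coefficients $\tfrac14(1-Z^2)$ and $\tfrac14 Z^AZ^B$. Once $(\alpha_H^{(-1)})_a$ is established, the divergence step is routine given the eigenfunction identities.
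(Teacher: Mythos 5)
Your proposal follows essentially the same route as the paper: the authors likewise push $\pl_s$ and $\pl_a$ forward via \eqref{drdvA}, evaluate $-k(\nu,\pl_a)$ and $\tr_\Sigma k$ from \eqref{slice 2nd ff} (noting that the $O(r)$ term $k_{AB}$ pairs with the $O(d^{-1})$ angular components exactly as you describe), and then apply the identity $\alpha_H = -k(\nu,\cdot) + d\bigl(\tr_\Sigma k/|H|\bigr) + O(d^{-2})$, which the paper simply asserts and you additionally justify via the boost to the mean-curvature gauge. The divergence step is carried out with the same eigenfunction identities, so the argument matches the paper's proof in substance.
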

\begin{proof}
The unit normal of $\Sigma_d$ is $\nu = \pl_s + O(d^{-1})$. By (\ref{drdvA}), we have
\begin{align*}
\pl_s &= Z \pl_r + \frac{1}{d} Z^A \pl_A + O(d^{-2}),\\
\pl_a &= Z_a \pl_r + \frac{1}{d} Z^A_a \pl_A  + O(d^{-2}).
\end{align*}
By (\ref{slice 2nd ff}), we get
\begin{align*}
-k(\nu, \pl_a) &= \frac{1}{2} \frac{M_u}{d} Z_a Z - \frac{1}{2d} (C_{AB})_u Z^A_a Z^B + O(d^{-2}),\\
\tr_\Sigma k &= -\frac{1}{2} \frac{M_u}{d} (1-Z^2) - \frac{1}{2d} (C_{AB})_u Z^A Z^B +O(d^{-2}).
\end{align*}
The assertion follows from $\alpha_H = - k(\nu,\pl_a) + \pl_a \frac{\tr_\Sigma k}{|H|} + O(d^{-2})$.

\end{proof}
\section{The expansion of the Wang-Yau quasi-local mass }
We consider the Wang-Yau quasi-local mass on the unit sphere constructed in the previous section.
\begin{theorem}\label{energy}
For $T_0 = (1,0,0,0)$, 
\begin{align}
\begin{split}
E (\Sigma_d,X,T_0) 
&= \frac{1}{8\pi d^2} \Bigg[ \int_{B^3} \frac{1}{8} \tilde\sigma^{AD} \tilde\sigma^{BE} (C_{AB})_u (C_{DE})_u - \det (h_0^{(-1)} - h^{(-1)}) \\
&\quad\qquad + \frac{1}{4} \int_{S^2} (tr_\Sigma k^{(-1)})^2 - \tau^{(-1)} \tilde\Delta(\tilde\Delta + 2) \tau^{(-1)} \Bigg] + O(d^{-3})
\end{split}
\end{align}
where $\tau^{(-1)}$ is the solution to the optimal embedding equation
\begin{align*}
 \tilde \Delta (\tilde \Delta + 2 )\tau^{(-1)} &= \frac{1}{2} F'''(1-Z^2)^2 - 4F''Z(1-Z^2) - 2F'(1-3Z^2)\\ &\quad +  \lt( \frac{1}{2} P'''_{AB}(1-Z^2) - 4 P''_{AB} Z - 6 P'_{AB} \right) Z^A Z^B.
\end{align*} 
\end{theorem}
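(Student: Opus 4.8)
The plan is to expand the Wang-Yau quasi-local energy $E(\Sigma_d, X, T_0)$ to second order in $d^{-1}$ and to isolate the coefficient of $d^{-2}$. Recall that $E$ is a surface integral over $\Sigma_d$ assembled from the induced metric $\sigma$, the norm $|H|$ of the mean curvature vector and the connection one-form $\alpha_H$ of the physical surface, together with the reference quantities (the mean curvature $\hat H$, $\alpha_{H_0}$, and the second fundamental form $h_0$) determined by the isometric embedding $X$ into $\mathbb{R}^{3,1}$ and the time function $\tau = -\langle X, T_0\rangle$. The expansion \eqref{new_metric} exhibits $\sigma = s^2\tilde\sigma + d^{-1}\sigma^{(-1)} + O(d^{-2})$, the slice second fundamental form \eqref{slice 2nd ff} enters at order $d^{-1}$ through $M_u$ and the news $(C_{AB})_u$, and the lemmas above give $\alpha_H = d^{-1}\alpha_H^{(-1)} + O(d^{-2})$ together with the trace $\mathrm{tr}_\Sigma k = d^{-1}\,\mathrm{tr}_\Sigma k^{(-1)} + O(d^{-2})$. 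Since $\Sigma_d$ lies in a single $t$-slice, the time function also begins at first order, $\tau = d^{-1}\tau^{(-1)} + O(d^{-2})$. I would record all these orders first.

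At order $d^0$ the surface $\Sigma_d$ is a round unit sphere in flat $\mathbb{R}^3$, whose energy vanishes; the order $d^{-1}$ contribution is linear in the field strength and integrates to zero, reflecting that $E$ is quadratic in the deviation from the round reference at leading nontrivial order, so the first surviving term is $O(d^{-2})$. To compute it I would split the second-order integrand into a $\tau$-independent part and a $\tau$-dependent part. The $\tau$-independent part comes from comparing the physical and reference geometries at fixed induced metric: expanding the mean-curvature terms produces a quadratic form in $h_0^{(-1)} - h^{(-1)}$, and the crucial simplification is that the physical and reference surfaces are isometric, so by the Gauss equation their second fundamental forms share the same determinant (the Gauss curvature of $\sigma$); this identity is what lets the quadratic mean-curvature comparison be repackaged as $\det(h_0^{(-1)} - h^{(-1)})$. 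The relation $|H|^2 = H_{\mathrm{slice}}^2 - (\mathrm{tr}_\Sigma k)^2$ contributes the $(\mathrm{tr}_\Sigma k^{(-1)})^2$ term, and the tangential slice data $k_{AB} = \tfrac{r}{2}(C_{AB})_u$ produces the news density $\tfrac18\tilde\sigma^{AD}\tilde\sigma^{BE}(C_{AB})_u(C_{DE})_u$.

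For the $\tau$-dependent part I would exploit the variational structure that names the theorem. At order $d^{-2}$ the contribution of the time function is a quadratic functional $\mathcal Q[\tau^{(-1)}] + \mathcal L[\tau^{(-1)}]$, where $\mathcal Q[\tau] = \tfrac14\int_{S^2}\tau\,\tilde\Delta(\tilde\Delta+2)\tau$ is quadratic and $\mathcal L$ is the linear coupling of $\tau^{(-1)}$ to the divergence $2\tilde\na^a(\alpha_H^{(-1)})_a$ computed in the lemma above. The optimal embedding equation is exactly the Euler-Lagrange equation $\tilde\Delta(\tilde\Delta+2)\tau^{(-1)} = 2\tilde\na^a(\alpha_H^{(-1)})_a$ stated in the theorem; substituting it shows $\mathcal L[\tau^{(-1)}] = -2\mathcal Q[\tau^{(-1)}]$, so the combined contribution collapses to $-\mathcal Q[\tau^{(-1)}] = -\tfrac14\int_{S^2}\tau^{(-1)}\tilde\Delta(\tilde\Delta+2)\tau^{(-1)}$. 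This is precisely the mechanism that avoids solving for $\tau^{(-1)}$ explicitly, in contrast to the Vaidya case.

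The main obstacle will be the bookkeeping of the $\tau$-independent part, and in particular recasting the news and determinant terms as a genuine bulk integral over $B^3$ rather than a surface integral over $\Sigma_d = \Sigma(1)$. I expect this to require viewing $\{\Sigma(s)\}_{0<s\le 1}$ as a radial foliation of $B^3$ and trading the boundary integrand at $s=1$ for a radial integral of its $s$-derivatives via the divergence theorem (equivalently the fundamental theorem of calculus in $s$), which is natural because $F, Q_A, P_{AB}$ are functions of the single variable $x = sZ$ and $(C_{AB})_u \to -P'_{AB}(sZ)$ under the limit \eqref{limit_coordinate_bondi}. The delicate points are matching the reference second fundamental form $h_0^{(-1)}$ to the perturbed induced metric $\sigma^{(-1)}$ through the linearized isometric embedding, and verifying that the various divergence and lower-order terms generated along the way cancel so that only the displayed combination survives; the clean separation into the groups above and the spectral factorization $\tilde\Delta(\tilde\Delta+2)$ on the $\ell\ge 2$ eigenspaces are what make the final expression tractable.
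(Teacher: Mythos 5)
Your decomposition is, modulo packaging, the same as the paper's: the paper writes $E = E_{BY} + (E_{LY}-E_{BY}) + (E-E_{LY})$, your ``$\tau$-independent part'' is $E_{LY}$ (split into the bulk Brown--York piece and the surface correction $(tr_\Sigma k^{(-1)})^2$ coming from expanding $|H|=\sqrt{H_{slice}^2-(tr_\Sigma k)^2}$, which you identify correctly), and your Euler--Lagrange argument for the $\tau$-dependent part --- a quadratic functional plus a linear coupling that collapses to $-\tfrac14\int_{S^2}\tau^{(-1)}\tilde\Delta(\tilde\Delta+2)\tau^{(-1)}$ once the optimal embedding equation is imposed --- is exactly the second-variation mechanism the paper imports from \cite{Chen-Wang-Yau1,Chen-Wang-Yau2}. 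Those parts of your proposal are sound.

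The gap sits in the piece you yourself flag as ``the main obstacle'': the conversion of the Brown--York surface integral into the bulk integral over $B^3$. Two things go wrong. First, the assertion that ``by the Gauss equation their second fundamental forms share the same determinant'' is false for the physical surface: inside the curved slice the Gauss equation reads $K_\sigma = \det h + (\hbox{ambient sectional curvature})$, whereas $K_\sigma=\det h_0$ holds only for the flat reference. That discrepancy is not something to be cancelled --- it is precisely what generates the bulk scalar-curvature integral $\tfrac12\int_{B^3}R$ when the linear-in-perturbation part of $\int_\Sigma(H_0-H)$ is integrated over the radial foliation; this is the content of Lemma 3.1 of \cite{Chen-Wang-Wang-Yau}, which the paper cites for exactly this step. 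Second, and more seriously, the news density cannot arise where you place it: the Brown--York integrand $H_0-H$ sees only the intrinsic geometry of the slice $t=d$, so the slice's second fundamental form $k_{AB}=\tfrac{r}{2}(C_{AB})_u+O(1)$ can enter only through the vacuum constraint equation $R=|k|^2-(tr\,k)^2$, which converts the bulk scalar curvature into $\tfrac12\bigl(|k^{(-1)}|^2-(tr\,k^{(-1)})^2\bigr)$ and hence, via \eqref{slice 2nd ff}, into $\tfrac18\tilde\sigma^{AD}\tilde\sigma^{BE}(C_{AB})_u(C_{DE})_u$. Your write-up never invokes the constraint equation, so as it stands it has no mechanism for producing the first bulk term of the theorem; the radial-foliation/fundamental-theorem-of-calculus idea is the right vehicle for the surface-to-bulk conversion, but without the ambient-curvature term in the Gauss equation and the constraint equation it does not arrive at the stated integrand.
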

\begin{proof}
We write
\[
 E (\Sigma_d,X,T_0) = E_{BY}(\Sigma_d) + (E_{LY}(\Sigma_d)- E_{BY}(\Sigma_d)) + ( E (\Sigma_d,X,T_0) - E_{LY})
\]
where $E_{BY} $ and $E_{LY}$ denote the Brown-York mass and the Liu-Yau mass, respectively. From Lemma 3.1 of \cite{Chen-Wang-Wang-Yau}, we conclude
\[
E_{BY} =\frac{1}{8\pi d^2}  \int_{B^3} \frac{|k^{(-1)}|^2 - (tr k^{(-1)})^2}{2} - \det(h_0^{(-1)} - h^{(-1)}) + O(d^{-3}),
\]
where we also use the vacuum constraint equation 
\[
R= |k|^2 - (tr k)^2.
\]
It is easy to see that 
\[
E_{LY}- E_{BY} =  \frac{1}{32\pi d^2} \int_{S^2} (tr_\Sigma k^{(-1)})^2 + O(d^{-3}).
\]
From the second variation of the Wang-Yau mass in \cite{Chen-Wang-Yau1,Chen-Wang-Yau2}, we have
\[
  E (\Sigma_d,X,T_0) - E_{LY} =\frac{1}{32\pi d^2} \int_{S^2} \tau^{(-1)} \tilde\Delta(\tilde\Delta + 2) \tau^{(-1)} + O(d^{-3}).
 \]
 Finally, we apply \eqref{slice 2nd ff} to evaluate $|k^{(-1)}|$ and $tr k^{(-1)}$.
\end{proof}

\section{Evaluating the qausi-local mass}
Recall the $O(\frac{1}{d})$ terms of the metric coefficients on $B_d$
\begin{align*}
\bar g_{ss}^{(-1)} &= F(sZ) Z^2 + 2Q_A(sZ) Z Z^A + P_{AB}(sZ) Z^A Z^B \\
\bar g_{as}^{(-1)} &= s \lt[ F(sZ)Z Z_a + Q_A(sZ) (Z Z^A_a + Z_a Z^A) + P_{AB}(sZ)Z^A_a Z^B \rt]\\
\bar g_{ab}^{(-1)} &= s^2 \lt[ F(sZ) Z_a Z_b + Q_A(sZ) (Z_a Z^A_b + Z_b Z^A_a) + P_{AB}(sZ) Z^A_a Z^B_b \rt]
\end{align*}

To apply Theorem \ref{energy}, we need to compute $h_0^{(-1)} - h^{(-1)}$ and $\tau^{(-1)}$. We first derive a formula for $h_0^{(-1)} - h^{(-1)}$. 
\begin{lem}\label{h_0-h} 
Let $\mathcal{A}_{AB}(Z,s)$ be a trace-free, symmetric 2-tensor that solves the ODE \begin{align}\label{lin iso emb ode}
\mathcal{A}''_{AB}(Z,s)(1-Z^2) - 6\mathcal{A}'_{AB}(Z,s)Z - 4\mathcal{A}_{AB}(Z,s) = -\frac{s^3}{2} P''_{AB}(sZ) - \frac{s^2}{2}P'_{AB}(sZ)Z - 2s P_{AB}(sZ),
\end{align}
for each $0 < s \le 1$. Here $\mathcal{A}'_{AB}$ means $\frac{\pl \mathcal{A}_{AB}}{\pl Z}$. Then the difference of second fundamental forms on the sphere of radius $s$ is given by
\begin{align*}
  & h_0^{(-1)} - h^{(-1)} \\
=& - \mathcal{A}''_{AB} Z_a Z_b Z^A Z^B  + (\frac{s^2}{2} P'_{AB}(sZ) - 2 \mathcal{A}'_{AB}) \lt( Z_a Z^A_b + Z_b Z^A_a \rt) Z^B \\
&+ \lt(  \mathcal{A}'_{AB}Z   + \mathcal{A}_{AB} - \frac{s}{2}P_{AB}(sZ) \rt) Z^A Z^B \tilde\sigma_{ab} + \lt( sP_{AB}(sZ) - \frac{s^2}{2} P'_{AB}(sZ) Z - 2\mathcal{A}_{AB} \rt)Z^A_a Z^B_b.
\end{align*}
\end{lem}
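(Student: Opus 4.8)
### Proof proposal

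The plan is to compute the difference $h_0^{(-1)} - h^{(-1)}$ order by order in $d^{-1}$, where $h^{(-1)}$ is the $O(d^{-1})$ term of the second fundamental form of $\Sigma_d(s)$ in the physical slice and $h_0^{(-1)}$ is the corresponding term for an isometric embedding of the same metric into Euclidean $\mathbb{R}^3$. The leading order of both is the round sphere of radius $s$, so the difference starts at $O(d^{-1})$, and the tensor $\mathcal{A}_{AB}$ will arise precisely as the deformation field of the isometric embedding at this order.

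First I would expand $h^{(-1)}$ directly from the metric \eqref{new_metric}. Since the unit normal is $\nu = \pl_s + O(d^{-1})$ and the second fundamental form is $h_{ab} = \langle \na_{\pl_a} \pl_b, \nu \rangle$, the $O(d^{-1})$ piece is governed by the Christoffel symbols built from $\bar g_{ss}^{(-1)}$, $\bar g_{as}^{(-1)}$, and $\bar g_{ab}^{(-1)}$ recorded at the start of this section. This is a routine but careful computation; I expect the terms involving $F$ and $Q_A$ to combine nicely, leaving contributions structured by the tensor products $Z_a Z_b$, $Z_a Z^A_b + Z_b Z^A_a$, $\tilde\sigma_{ab}$, and $Z^A_a Z^B_b$ contracted against $P_{AB}(sZ)$ and its $x$-derivatives. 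For the Euclidean side, I would appeal to the linearized isometric embedding: writing the perturbed embedding of the round sphere of radius $s$ and demanding that it induce the metric $\bar g_{ab}^{(-1)}$ to first order forces a linearized isometric embedding equation whose solution is encoded by $\mathcal{A}_{AB}$. Matching the tangential components of the embedding deformation against $\bar g_{ab}^{(-1)}$ produces exactly the ODE \eqref{lin iso emb ode}; the normal component then contributes the $h_0^{(-1)}$ part of the difference.

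The main obstacle is the linearized isometric embedding step. On $S^2$, a metric perturbation $\delta\sigma_{ab}$ can be realized by an embedding deformation only after solving the linearized Gauss equation, and the freedom in the tangential deformation must be fixed. The key structural fact I would use is that the symmetric trace-free tensor $P_{AB}(sZ) Z^A_a Z^B_b$ projects onto a definite piece of the metric perturbation, and the operator acting on the radial profile of $\mathcal{A}_{AB}$ in the $Z$ variable is the ODE operator $(1-Z^2)\frac{d^2}{dZ^2} - 6Z\frac{d}{dZ} - 4$ appearing on the left of \eqref{lin iso emb ode}. Verifying that this operator is exactly what the Codazzi/Gauss system reduces to — after separating the angular dependence carried by $Z^A$ and $Z^A_a$ — is the heart of the argument. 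I would confirm this by substituting the proposed form of $h_0^{(-1)} - h^{(-1)}$ back into the linearized isometric embedding relation and checking that the defining equation for $\mathcal{A}_{AB}$ makes all the $P_{AB}$-dependent terms consistent.

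Finally, once both $h^{(-1)}$ and $h_0^{(-1)}$ are in hand, I would assemble the difference and collect terms according to their tensorial type. The contributions proportional to $F$ and $Q_A$ cancel between the two second fundamental forms at this order (reflecting that these pieces do not obstruct the embedding), leaving only the $P_{AB}$-terms displayed in the statement. The coefficients $-\mathcal{A}''_{AB}$, $\frac{s^2}{2}P'_{AB}(sZ) - 2\mathcal{A}'_{AB}$, $\mathcal{A}'_{AB}Z + \mathcal{A}_{AB} - \frac{s}{2}P_{AB}(sZ)$, and $sP_{AB}(sZ) - \frac{s^2}{2}P'_{AB}(sZ)Z - 2\mathcal{A}_{AB}$ then emerge by bookkeeping, using \eqref{lin iso emb ode} to eliminate the second-order term $\mathcal{A}''_{AB}(1-Z^2)$ wherever the trace part $\tilde\sigma_{ab}$ is generated. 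I expect the verification that the trace of the difference is consistent — that is, that the $\tilde\sigma_{ab}$ coefficient is correctly sourced — to require invoking the ODE \eqref{lin iso emb ode} one final time.
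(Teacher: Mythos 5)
Your proposal follows the paper's argument: $h^{(-1)}$ is read off directly from the metric expansion via the perturbed unit normal, $h_0^{(-1)}$ comes from the linearized isometric embedding of the round sphere with the normal deformation $\beta$ determined by matching the linearized Gauss curvature, the action of $\tilde\Delta+2$ on $\mathcal{A}_{AB}(Z,s)Z^AZ^B$ reduces to the stated ODE operator $(1-Z^2)\partial_Z^2-6Z\partial_Z-4$, and the $F$- and $Q_A$-terms cancel in the final assembly. One small correction: the ODE \eqref{lin iso emb ode} is sourced by the \emph{normal} component of the embedding deformation through the linearized Gauss equation (the tangential components drop out of $h_0^{(-1)}$ altogether), not by ``matching the tangential components'' as you first state --- though your second paragraph effectively supplies the right mechanism.
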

\begin{proof}
We start with $h^{(-1)}$. The unit normal is given by 
\[ \bar\nu = \lt( 1 - \frac{\bar g_{ss}^{(-1)}}{2d}\rt) \lt( \pl_s - \frac{s^{-2} \tilde\sigma^{ab} \bar g_{as}^{(-1)} }{d} \pl_b \rt) + O(d^{-2}). \]
We compute
\[ h_{ab} = \frac{1}{2} ( \langle D_{\pl_a} \bar\nu, \pl_b \rangle + \langle D_{\pl_b} \bar\nu, \pl_a \rangle ) = s \tilde \sigma_{ab} + \frac{1}{d} \lt( \frac{1}{2}\pl_s \bar g_{ab}^{(-1)} - \frac{\tilde\na_a \bar g_{bs}^{(-1)} + \tilde \na_b \bar g_{as}^{(-1)}}{2} - \frac{\bar g_{ss}^{(-1)}}{2} s\tilde\sigma_{ab} \rt)+ O(d^{-2}). \]

For $h_0^{(-1)}$, we expand the isometric embedding $X$ as
\[
X=s\tilde X + \frac{1}{d}X^{(-1)} + O(d^{-2})
\]
where $\tilde X$ denote the unit sphere in $\R^3$. We decompose $X^{(-1)}$ into $X^{(-1)} = \alpha^a \pl_a + \beta \nu$. The linearized isometric embedding equation reads
\begin{equation}\label{isometric_first_order} \sigma^{(-1)}_{ab} = s^2( \tilde\sigma_{ac}\tilde\na_b \alpha^c + \tilde\sigma_{bc}\tilde\na_a \alpha^c) + 2\beta s \tilde \sigma_{ab}.  \end{equation}
From the computation in \cite[pages 938-939]{Wang-Yau2}, \eqref{isometric_first_order}  implies that
\begin{equation}\label{second_ff_first}   h_0^{(-1)} = -\tilde\na_a \tilde\na_b \beta -\beta \tilde \sigma_{ab} + \frac{1}{s} \sigma^{(-1)}_{ab}.  \end{equation}
Putting these together, we obtain \begin{align} \label{second_ff_difference}
h_0^{(-1)} - h^{(-1)} = -\tilde\na_a\tilde \na_b \beta - \beta\tilde\sigma_{ab} + \frac{1}{s}\sigma^{(-1)}_{ab} - \frac{1}{2}(\pl_s \bar g_{ab})^{(-1)} + \frac{\tilde\na_a \bar g_{bs}^{(-1)} + \tilde\na_b \bar g_{as}^{(-1)}}{2} + \frac{\bar g_{ss}^{(-1)}}{2} s \tilde \sigma_{ab}.
\end{align}
To solve $\beta$, we consider the expansion of the Gauss curvature $K(d,s)$ of $\Sigma_d(s)$. Let
\[
K(d,s) =     \frac{1}{s^2} + \frac{1}{d} K^{(-1)} + O(d^{-2})
\]
On the one hand, from the metric expansion, we get 
\[
 K^{(-1)} =\frac{1}{s^2} \lt (  -\tilde\na^a \tilde\na^b \sigma^{(-1)}_{ab} + tr_{S^2}\sigma^{(-1)} + \tilde\Delta  tr_{S^2} \sigma^{(-1)} \rt).
\] 
On the other hand, combining \eqref{second_ff_first}   and the Gauss equation, we conclude that 
\[
 K^{(-1)} =\frac{2}{s}  (\tilde\Delta + 2)\beta
\]
As a result, $\beta$ is the solution of
\begin{align}\label{lin_iso_emb}
2 s (\tilde\Delta + 2)\beta = -\tilde\na^a \tilde\na^b \sigma^{(-1)}_{ab} + tr_{S^2}\sigma^{(-1)} + \tilde\Delta  tr_{S^2} \sigma^{(-1)}.
\end{align}
For the right hand side, we compute 
\begin{align*}
-\tilde\na^a \tilde\na^b \sigma^{(-1)}_{ab} + tr_{S^2}\sigma^{(-1)} + \tilde\Delta  tr_{S^2} \sigma^{(-1)}
&= s^3 F'(sZ) Z (1-Z^2) + s^2 F(2-4Z^2) \\
&\quad + s^3 Q'_A(sZ)(2-2Z^2) Z^A - 8 s^2 Q_A(sZ) ZZ^A\\
&\quad + \lt( -s^4 P''_{AB}(sZ) - s^3 P'_{AB}(sZ) - 4s^2 P_{AB}(sZ) \rt) Z^A Z^B
\end{align*}
On the other hand, 
let $\mathcal{F}$ and  $\mathcal{Q}_A$ be an antiderivative of $F$ and  $Q_A$ respectively, and $\mathcal{A}_{AB}$ satisfy (\ref{lin iso emb ode}). One verifies that  
\begin{equation}\label{beta_formula} \beta = \frac{\mathcal{F}(sZ)}{2} Z + \mathcal{Q}_A(sZ)Z^A + \mathcal{A}_{AB}(Z,s) Z^AZ^B \end{equation}
solves the linearized isometric embedding equation \eqref{lin_iso_emb} since, for a trace-free, symmetric 2-tensor $\mathcal{A}_{AB}(Z,s)$, 
\begin{align*}
(\tilde\Delta + 2) \lt( \mathcal{A}_{AB}(Z,s) Z^A Z^B \rt)= \lt( \mathcal{A}''_{AB}(Z,s) (1-Z^2) -6 \mathcal{A}'_{AB}(Z,s) Z - 4 \mathcal{A}_{AB}(Z,s) \rt) Z^A Z^B.
\end{align*} 
We are ready compute \eqref{second_ff_difference} where $\beta$ is given in \eqref{beta_formula}. We have
\begin{align*}
- \tilde\na_a \tilde\na_b \beta - \beta \tilde\sigma_{ab} &= - \frac{s^2}{2} F' Z Z_a Z_b + \frac{s}{2} F Z^2 \tilde \sigma_{ab} \\&\quad - s^2 Q_A'Z_aZ_b Z^A + s Q_A ZZ^A \tilde\sigma_{ab}\\
&\quad - \mathcal{A}''_{AB} Z_a Z_b Z^AZ^B -2 \mathcal{A}'_{AB} (Z_a Z^A_b + Z_b Z^A_a ) Z^B \\
&\quad + \lt( sP_{AB} -2 \mathcal{A}_{AB} \rt) Z^A_a Z^B_b + (\mathcal{A}'_{AB} Z + \mathcal{A}_{AB}) Z^A Z^B \tilde\sigma_{ab} - \frac{1}{s} \sigma^{(-1)}_{ab} ,
\end{align*}

\begin{align*}
\frac{1}{s}\sigma^{(-1)}_{ab} - \frac{1}{2} \pl_s \bar g_{ab}^{(-1)} &= -\frac{s^2}{2} \lt( F'Z Z_a Z_b + Q_A' Z(Z_a Z^A_b + Z_b Z^A_a) + P'_{AB} Z Z^A_a Z^B_b \rt) 
\\
\frac{1}{2}(\tilde\na_a \bar g_{bs}^{(-1)} + \tilde\na_b \bar g_{as}^{(-1)}) &=  s^2 F'Z Z_a Z_b - sFZ^2 \tilde\sigma_{ab} \\
&\quad + \frac{s^2}{2} Q_A' Z (Z_a Z^A_b + Z_b Z^A_a) + s^2 Q'_A Z_a Z_b Z^A - 2s Q_A Z Z^A \tilde \sigma_{ab}\\
&\quad + \frac{s^2}{2} P'_{AB} (Z_a Z^A_b + Z_b Z^A_a) Z^B - s P_{AB} Z^A Z^B \tilde\sigma_{ab} + \frac{1}{s} \sigma_{ab}^{(-1)}\\
\frac{1}{2} \bar g_{ss}^{(-1)} s\tilde\sigma_{ab} &= s \lt( \frac{1}{2}F Z^2 + Q_A Z Z^A + \frac{1}{2} P_{AB} Z^A Z^B \rt) \tilde \sigma_{ab}.
\end{align*}

We see that terms involving $F,Q_A$ cancel and the result has the asserted form.
\end{proof}
Next we compute $\tau^{(-1)}$. 
\begin{lemma}\label{lemma4.2}
Define the second order differential operator 
 \[ L\mathcal{G}(Z) = \lt[ (1-Z^2) \mathcal{G}'\rt]'(Z) - 4 \mathcal{G}'(Z)Z - 6 \mathcal{G}(Z). \]
Let $\mathcal{B}_{AB}(Z)$ be a traceless, symmetric 2-tensor that solves the ODE \begin{align}\label{ODE_tau}
L(L+2)\mathcal{B}_{AB} = \frac{1}{2}P'''_{AB}(Z)(1-Z^2) - 4 P''_{AB}(Z)Z - 6P'_{AB}(Z).
\end{align}
Then \[
\tau^{(-1)} = Z \mathcal{F}(Z) + \mathcal{B}(Z)_{AB}Z^A Z^B
\]
solves the leading order of optimal embedding equation 
\begin{align*}
 \tilde \Delta (\tilde \Delta + 2 )\tau^{(-1)} &= \frac{1}{2} F'''(1-Z^2)^2 - 4F''Z(1-Z^2) - 2F'(1-3Z^2)\\ &\quad +  \lt( \frac{1}{2} P'''_{AB}(Z)(1-Z^2) - 4 P''_{AB}(Z)Z - 6 P'_{AB}(Z) \rt) Z^A Z^B.
\end{align*}
\end{lemma}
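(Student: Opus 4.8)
The plan is to exploit the $\tilde\Delta$-invariant splitting of $\tau^{(-1)}$ into its ``$Z$-radial'' part $Z\mathcal{F}(Z)$, which is a function of $Z$ alone, and its spin-two part $\mathcal{B}_{AB}(Z)Z^A Z^B$. Because the spherical Laplacian sends a function of $Z$ to a function of $Z$, and sends any expression of the form $\mathcal{G}_{AB}(Z)Z^A Z^B$ (with $\mathcal{G}_{AB}$ traceless symmetric) to another expression of the same form, the fourth-order operator $\tilde\Delta(\tilde\Delta+2)$ respects this decomposition. The right-hand side of the optimal embedding equation has the matching structure---a function of $Z$ plus a multiple of $Z^A Z^B$---so I would match the two pieces separately.

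For the spin-two piece I would use the operator identity established in the proof of Lemma \ref{h_0-h}, which in the present notation reads $(\tilde\Delta+2)(\mathcal{G}_{AB}Z^A Z^B)=\bigl((1-Z^2)\mathcal{G}_{AB}''-6Z\mathcal{G}_{AB}'-4\mathcal{G}_{AB}\bigr)Z^A Z^B$. Since $L\mathcal{G}=[(1-Z^2)\mathcal{G}']'-4Z\mathcal{G}'-6\mathcal{G}=(1-Z^2)\mathcal{G}''-6Z\mathcal{G}'-6\mathcal{G}$, the bracket on the right is exactly $(L+2)\mathcal{G}_{AB}$; subtracting $2\mathcal{G}_{AB}Z^A Z^B$ then shows that $\tilde\Delta$ alone acts on the coefficient tensor as $L$. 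Composing the two gives $\tilde\Delta(\tilde\Delta+2)(\mathcal{B}_{AB}Z^A Z^B)=\bigl(L(L+2)\mathcal{B}_{AB}\bigr)Z^A Z^B$, and the defining ODE \eqref{ODE_tau} for $\mathcal{B}_{AB}$ reproduces the $Z^A Z^B$ term of the target verbatim. The only care needed here is the bookkeeping of the two ``$+2$'' shifts, i.e.\ keeping straight that $\tilde\Delta$ and $\tilde\Delta+2$ correspond to $L$ and $L+2$ on the coefficients.

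For the $Z$-radial piece I would compute directly, using the Legendre form $\tilde\Delta g(Z)=[(1-Z^2)g'(Z)]'=(1-Z^2)g''-2Zg'$, valid for any function of $Z$ because $Z$ is a first eigenfunction with $\tilde\Delta Z=-2Z$. One first forms $(\tilde\Delta+2)(Z\mathcal{F})$, in which the undifferentiated contribution $-2Z\mathcal{F}$ is cancelled by the additive $+2$, leaving an expression linear in $\mathcal{F}$ and its derivatives; applying $\tilde\Delta$ once more and using $\mathcal{F}'=F$ converts everything into $F',F'',F'''$, after which collecting powers of $Z$ should produce $\tfrac12 F'''(1-Z^2)^2-4F''Z(1-Z^2)-2F'(1-3Z^2)$. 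Note that the scalar part of $\tau^{(-1)}$ is only determined modulo the kernel of $\tilde\Delta(\tilde\Delta+2)$, i.e.\ modulo constants and multiples of $Z$, so such terms can be discarded freely.

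I expect the main obstacle to be precisely this $Z$-radial computation. Whereas the spin-two piece collapses instantly once the operator identity is in hand, the scalar piece requires carrying out the full fourth-order differentiation of the product $Z\mathcal{F}(Z)$ explicitly and then collecting the polynomials in $Z$ multiplying $F'$, $F''$ and $F'''$. Confirming that these collapse to exactly the stated right-hand side---and, in particular, pinning down the correct normalization of the antiderivative $\mathcal{F}$ so that the $F$-dependence enters only through derivatives of $F$ with no leftover undifferentiated term---is the delicate bookkeeping step where an error is most likely to hide.
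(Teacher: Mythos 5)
Your decomposition by linearity and your treatment of the spin-two piece coincide with what the paper actually does: its proof splits the equation into an $F$-equation and a $P_{AB}$-equation, and for the latter merely asserts that the verification is ``straightforward.'' Your operator identity $\tilde\Delta\bigl(\mathcal{G}_{AB}Z^AZ^B\bigr)=(L\mathcal{G}_{AB})Z^AZ^B$, obtained by subtracting $2\mathcal{G}_{AB}Z^AZ^B$ from the identity displayed in the proof of Lemma \ref{h_0-h}, is exactly that verification, correctly carried out, and the composition $\tilde\Delta(\tilde\Delta+2)(\mathcal{B}_{AB}Z^AZ^B)=\bigl(L(L+2)\mathcal{B}_{AB}\bigr)Z^AZ^B$ together with \eqref{ODE_tau} settles that half of the lemma.

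The gap is in the scalar piece, and it is not merely unfinished bookkeeping. The paper does not compute this piece at all: it quotes Lemma 3.3 of \cite{Vaidya} for the statement that $\tau_1^{(-1)}=Z\mathcal{F}(Z)$ solves the $F$-equation. You propose to replace the citation by a direct computation and predict that everything ``should collapse'' to the stated right-hand side, with the $F$-dependence entering only through $F'$, $F''$, $F'''$. With the only normalization this paper supplies, namely $\mathcal{F}'=F$ from the proof of Lemma \ref{h_0-h}, that prediction is false: one finds
\[
\tilde\Delta(\tilde\Delta+2)\bigl(Z\mathcal{F}(Z)\bigr)=(24Z^2-8)F+(36Z^3-28Z)F'+(4-12Z^2)(1-Z^2)F''+Z(1-Z^2)^2F''',
\]
which contains an undifferentiated $F$ and whose $F'''$-coefficient is $Z(1-Z^2)^2$ rather than $\tfrac12(1-Z^2)^2$. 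The cleanest test is $F\equiv\mathrm{const}$: the right-hand side of the $F$-equation vanishes, while $Z\mathcal{F}=FZ^2+cZ$ gives $\tilde\Delta(\tilde\Delta+2)(FZ^2)=F(24Z^2-8)\neq0$; no choice of integration constant for $\mathcal{F}$ can repair this, since it only shifts $Z\mathcal{F}$ by a multiple of $Z$, which lies in the kernel of $\tilde\Delta(\tilde\Delta+2)$. So either the cited Lemma 3.3 of \cite{Vaidya} defines $\mathcal{F}$ (or the right-hand side) with different conventions --- in which case your proof must import that precise statement rather than re-derive it --- or the scalar part of the claim needs a different $\tau_1^{(-1)}$. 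Either way, the step you yourself flag as the place ``where an error is most likely to hide'' is exactly where your argument does not close as written.
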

\begin{proof}
The equation is linear. We look for $\tau^{(-1)}_1$ and $\tau^{(-1)}_2 $ such that 
\[
\begin{split}
 \tilde \Delta (\tilde \Delta + 2 )\tau^{(-1)}_1 &= \frac{1}{2} F'''(1-Z^2)^2 - 4F''Z(1-Z^2) - 2F'(1-3Z^2),\\
 \tilde \Delta (\tilde \Delta + 2 )\tau^{(-1)}_2 &= \lt( \frac{1}{2}P'''_{AB}(Z)(1-Z^2) - 4 P''_{AB}(Z)Z - 6 P'_{AB}(Z) \rt) Z^AZ^B.
\end{split}
\]
From Lemma 3.3 of \cite{Vaidya}, $\tau^{(-1)}_1 = Z \mathcal{F}(Z)$ solves the first equation
\[
 \tilde \Delta (\tilde \Delta + 2 ) (Z \mathcal{F}(Z))=\frac{1}{2} F'''(1-Z^2)^2 - 4F''Z(1-Z^2) - 2F'(1-3Z^2).
\]
It is straightforward to verify that $\tau^{(-1)}_2  = \mathcal{B}_{AB}(Z) Z^A Z^B$
solves the second equation if the traceless, symmetric 2-tensor $\mathcal{B}_{AB}(Z)$ solves \eqref{ODE_tau}.\end{proof}
We are ready to state the main theorem for the quasi-local mass,
\begin{theorem}\label{main_theorem}
For $T_0 = (1,0,0,0)$ and $X$ solves the leading order term of the optimal embedding equation, the Wang-Yau quasi-local energy
\begin{align*}
\begin{split}
  & E(\Sigma_d,T_0,X) \\
=& \frac{1}{d^2} \Bigg[ \int_{B^3} \frac{1}{8} \sum_{A,B} P'_{AB}(sZ) P'_{AB}(sZ) - \det(h_0^{(-1)} - h^{(-1)}) \\
&+  \frac{1}{4}\int_{S^2}  \frac{1}{4}(P'_{AB} Z^A Z^B)^2 - \mathcal{B}_{DE} Z^D Z^E  \lt(\frac{1}{2}P'''_{AB}(Z)(1-Z^2) - 4 P''_{AB}(Z)Z - 6 P'_{AB}(Z) \rt) Z^AZ^B \Bigg] + O(d^{-3})
\end{split}
\end{align*} 
where $h_0^{(-1)} - h^{(-1)}$ is as determined in Lemma 5.1 and $\mathcal{B}_{AB}$ is as determined in Lemma 5.2.
\end{theorem}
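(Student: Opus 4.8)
The plan is to feed the explicit data of Lemma~\ref{h_0-h} and Lemma~\ref{lemma4.2} into the three groups of terms produced by Theorem~\ref{energy}, and to organize the surface integral by azimuthal symmetry so that the mass-aspect contribution drops out, leaving only news ($P_{AB}$) terms.

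First I would record the two limiting identities that convert Bondi data into $F$ and $P_{AB}$. Since $\lim_{d\to\infty}M=F(sZ)$ and $\lim_{d\to\infty}\mathfrak F_d^*u=-sZ$ by \eqref{limit_coordinate_bondi}, the chain rule gives $M_u\to -F'(sZ)$ and $(C_{AB})_u\to -P'_{AB}(sZ)$. Substituting the latter into the leading $B^3$ integrand of Theorem~\ref{energy} turns $\tfrac18\tilde\sigma^{AD}\tilde\sigma^{BE}(C_{AB})_u(C_{DE})_u$ into $\tfrac18\sum_{A,B}P'_{AB}(sZ)P'_{AB}(sZ)$, the first term of the assertion. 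The determinant term $-\det(h_0^{(-1)}-h^{(-1)})$ is then imported verbatim from Lemma~\ref{h_0-h}; note that the $F$ and $Q_A$ pieces have already cancelled there, so it is already a functional of $P_{AB}$ (and the auxiliary $\mathcal A_{AB}$) alone.

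For the surface integral I would substitute $tr_\Sigma k^{(-1)}=\tfrac12 F'(1-Z^2)+\tfrac12 P'_{AB}Z^AZ^B$ (the $s=1$ specialization of the computation in Lemma 3.2, using $M_u\to -F'$ and $(C_{AB})_u\to -P'$) together with the optimal embedding solution $\tau^{(-1)}=\tau^{(-1)}_1+\tau^{(-1)}_2$ of Lemma~\ref{lemma4.2}, where $\tau^{(-1)}_1$ is a function of $Z$ only and $\tau^{(-1)}_2=\mathcal B_{AB}Z^AZ^B$. Writing $\tilde\Delta(\tilde\Delta+2)\tau^{(-1)}=S+(\text{P-part})$, with $S$ the $F$-part of the source, I would expand $(tr_\Sigma k^{(-1)})^2-\tau^{(-1)}\tilde\Delta(\tilde\Delta+2)\tau^{(-1)}$ into pure-$F$, cross, and pure-$P$ pieces.

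The heart of the argument, and the step I expect to be the main obstacle, is showing that only the pure-$P$ piece survives. The clean way to see this is to use eigenfunctions adapted to the geodesic axis, $Z=\cos\Theta$, $Z^1\propto\sin\Theta\cos\Phi$, $Z^2\propto\sin\Theta\sin\Phi$: the contraction of any traceless symmetric $2$-tensor against $Z^AZ^B$ (hence $P'_{AB}Z^AZ^B$, $\mathcal B_{AB}Z^AZ^B$ and the P-part) is a pure $\cos 2\Phi,\sin 2\Phi$ combination, i.e.\ an azimuthal mode $m=\pm 2$, whereas $F'(1-Z^2)$, $\tau^{(-1)}_1$ and $S$ are axisymmetric. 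Thus every cross term is (axisymmetric)$\times(m=\pm2)$ and integrates to zero in $\Phi$. The remaining pure-$F$ contribution $\int_{S^2}\big[\tfrac14 (F')^2(1-Z^2)^2-\tau^{(-1)}_1 S\big]$ vanishes by integration by parts: since $\tau^{(-1)}_1$ solves $\tilde\Delta(\tilde\Delta+2)\tau^{(-1)}_1=S$ and the boundary contributions at $Z=\pm1$ are annihilated by the factors $(1-Z^2)$, one checks $\int_{S^2}\tau^{(-1)}_1 S=\int_{S^2}\tfrac14 (F')^2(1-Z^2)^2$. What is left is exactly $\int_{S^2}\big[\tfrac14 (P'_{AB}Z^AZ^B)^2-\mathcal B_{DE}Z^DZ^E(\text{P-part})\big]$, the last group in the statement, and assembling the three contributions with the prefactor of Theorem~\ref{energy} gives the formula. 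The delicate point throughout is that the cancellation of the mass-aspect terms hinges on $\tau^{(-1)}_1$ being the \emph{exact} axisymmetric solution of the $F$-part of the optimal embedding equation; this is precisely what makes the mass aspect $F$ contribute nothing and renders the quasi-local mass a functional of the news alone.
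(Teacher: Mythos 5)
Your proposal follows essentially the same route as the paper's proof: substitute Lemma \ref{h_0-h} and Lemma \ref{lemma4.2} into Theorem \ref{energy}, convert $(C_{AB})_u$ and $M_u$ into $-P'_{AB}$ and $-F'$, split the surface integral into a mass-aspect ($F$) part and a news ($P$) part, and show the former contributes nothing. The only differences are that you make explicit the azimuthal ($m=\pm 2$ versus axisymmetric) orthogonality that justifies dropping the cross terms, and you verify the vanishing of the pure-$F$ surface integral by integration by parts rather than citing \cite[(3.6)]{Vaidya} as the paper does; both are correct fillings-in of steps the paper leaves to the reader or to a citation.
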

\begin{proof}
We start with Theorem 4.1 in which $h_0^{(-1)} - h^{(-1)}$ is as determined in Lemma 5.1 and $\tau^{(-1)}$ is as determined in Lemma 5.2. We simplify the expression
\[
\begin{split}
\int_{S^2} (tr_\Sigma k^{(-1)})^2 - \tau^{(-1)} \tilde\Delta(\tilde\Delta + 2) \tau^{(-1)} = & \int_{S^2} \frac{1}{4} F^2(1-Z^2)^2 -\tau_1^{(-1)} \tilde\Delta(\tilde\Delta + 2) \tau_1^{(-1)}\\
& + \int_{S^2} \frac{1}{4} \lt( P'_{AB}(Z) Z^A Z^B \rt)^2 -\tau_2^{(-1)} \tilde\Delta(\tilde\Delta + 2) \tau_2^{(-1)}.
\end{split}
\]
We have 
\[
\int_{S^2}  \frac{1}{4} F^2(1-Z^2)^2 -\tau_1^{(-1)} \tilde\Delta(\tilde\Delta + 2) \tau_1^{(-1)}= 0
\]
by \cite[(3.6)]{Vaidya}. This finishes the proof of the theorem.
\end{proof}

In particular, we observe that the answer depends on the leading order term of the news function on $B^3$ since both ODEs in Lemma 5.1 and Lemma 5.2 are linear ODEs where the right-hand side depends on $P_{AB}$ and their derivatives. In general, we do not have explicit solutions to these ODEs. In the following section, we compute the quasi-local mass explicitly for a few special examples. 

\section{Special cases}
Write $E(\Sigma_d,T_0,X) = d^{-2} E^{(-2)} + O(d^{-3})$. We evaluate $E^{(-2)}$ for a few special cases of $P_{AB}$. Let $p_{AB}, q_{AB}$ be two constant symmetric traceless 2-tensors. 
\begin{prop}
If $P_{AB}(x) = p_{AB} + q_{AB} x,$ $E^{(-2)}=0$.
\end{prop}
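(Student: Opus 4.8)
The plan is to substitute the affine profile $P_{AB}(x) = p_{AB} + q_{AB}x$ directly into the formula of Theorem \ref{main_theorem} and to evaluate the bulk integral over $B^3$ and the boundary integral over $S^2$ separately, showing that they cancel. The basic observation driving everything is that for such a $P_{AB}$ one has $P'_{AB} = q_{AB}$ constant and $P''_{AB} = P'''_{AB} = 0$, so the right-hand sides of the two ODEs appearing in Lemma \ref{h_0-h} and Lemma \ref{lemma4.2} become polynomials and can be solved in closed form.

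First I would solve the linearized isometric embedding ODE \eqref{lin iso emb ode}. Its right-hand side collapses to $-2sp_{AB} - \frac{5s^2}{2}q_{AB}Z$, and since the operator $\mathcal{A}\mapsto \mathcal{A}''(1-Z^2) - 6\mathcal{A}'Z - 4\mathcal{A}$ sends an affine tensor $\alpha + \beta Z$ to $-4\alpha - 10\beta Z$, the solution is $\mathcal{A}_{AB}(Z,s) = \frac{s}{2}p_{AB} + \frac{s^2}{4}q_{AB}Z$. Substituting this $\mathcal{A}_{AB}$ into the formula of Lemma \ref{h_0-h}, I expect all four tensorial coefficients to vanish identically: the $Z_aZ_b$ term because $\mathcal{A}''_{AB}=0$, the mixed term because $\frac{s^2}{2}q_{AB} = 2\mathcal{A}'_{AB}$, and the $\tilde\sigma_{ab}$ and $Z^A_aZ^B_b$ terms by cancellation of the $p$- and $q$-contributions. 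Hence $h_0^{(-1)} - h^{(-1)} = 0$, the determinant term drops out of $E^{(-2)}$ entirely, and the bulk contributes only $\int_{B^3}\frac18\sum_{A,B}q_{AB}q_{AB} = \frac18|q|^2\,|B^3| = \frac{\pi}{6}|q|^2$, where $|q|^2 = \sum_{A,B}q_{AB}q_{AB}$ and $|B^3| = \frac{4\pi}{3}$.

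Next I would solve \eqref{ODE_tau}, whose right-hand side is now the constant $-6q_{AB}$. Since $L(L+2)$ acts on constant tensors as multiplication by $24$ (equivalently, $\tau^{(-1)}_2 = \mathcal{B}_{AB}Z^AZ^B$ is a pure $\ell=2$ function on which $\tilde\Delta(\tilde\Delta+2)$ acts as $(-6)(-4)=24$), the solution is the constant $\mathcal{B}_{AB} = -\frac14 q_{AB}$. The boundary integrand then becomes explicit, $\frac14(q_{AB}Z^AZ^B)^2 - \mathcal{B}_{DE}Z^DZ^E(-6q_{AB}Z^AZ^B) = -\frac54(q_{AB}Z^AZ^B)^2$. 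Evaluating $\int_{S^2}(q_{AB}Z^AZ^B)^2$ through the quartic moments of the first eigenfunctions realized as coordinate functions $x_A$, namely $\int_{S^2} x_Ax_Bx_Cx_D = \frac{4\pi}{15}(\delta_{AB}\delta_{CD}+\delta_{AC}\delta_{BD}+\delta_{AD}\delta_{BC})$, and using the tracelessness of $q_{AB}$, gives $\int_{S^2}(q_{AB}Z^AZ^B)^2 = \frac{8\pi}{15}|q|^2$. Thus the boundary term equals $\frac14\cdot(-\frac54)\cdot\frac{8\pi}{15}|q|^2 = -\frac{\pi}{6}|q|^2$, which exactly cancels the bulk term, so $E^{(-2)}=0$.

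The main obstacle I anticipate is not any individual ODE but the bookkeeping needed to make the two normalizations match: one must verify that the determinant term genuinely vanishes rather than merely contributing at the same order, and that the numerical constants $\frac18$ and $\frac14$ of Theorem \ref{main_theorem}, the volume $\frac{4\pi}{3}$ of $B^3$, and the quartic-moment constant $\frac{8\pi}{15}$ combine to an exact cancellation instead of a nonzero multiple of $|q|^2$. Equal care is required in passing between the two norms that appear, the bulk $\sum_{A,B}q_{AB}q_{AB}$ and the boundary $(q_{AB}Z^AZ^B)^2$, via the eigenfunction moment identity.
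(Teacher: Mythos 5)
Your proposal is correct and follows essentially the same route as the paper: solve the two ODEs explicitly to get $\mathcal{A}_{AB}=\frac{s}{2}p_{AB}+\frac{s^2Z}{4}q_{AB}$ and $\mathcal{B}_{AB}=-\frac14 q_{AB}$, verify $h_0^{(-1)}-h^{(-1)}=0$, and check that the bulk contribution $\frac{\pi}{6}|q|^2$ cancels the boundary contribution $-\frac{\pi}{6}|q|^2$ via the quartic moment identity \eqref{integral}. The only difference is that you spell out the intermediate steps (the action of the ODE operators on affine/constant tensors) that the paper leaves as ``one verifies.''
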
\begin{proof}
One verifies that \begin{align*} 
\mathcal{A}_{AB}(Z,s) &= \frac{s}{2} p_{AB} + \frac{s^2Z}{4} q_{AB} \\
\mathcal{B}_{AB}(Z) &= - \frac{1}{4} q_{AB}
\end{align*} solve \eqref{lin iso emb ode} and \eqref{ODE_tau} respectively.  Direct computation shows that $h_0^{(-1)} - h^{(-1)}=0$. Hence,
\[ E^{(-2)} =\frac{1}{8 \pi} \lt ( \frac{1}{8 } \sum_{A,B} q_{AB}q_{AB} \cdot \frac{4\pi}{3} + \frac{1}{4} \int_{S^2} \frac{1}{4} \lt( q_{AB} Z^A Z^B \rt)^2 + \frac{1}{4}  q_{DE} Z^DZ^E \cdot (-6 q_{AB} Z^A Z^B)  \rt) =0, \] 
where we used the identity \begin{align}\label{integral}
\int_{S^2} Z^A Z^B Z^D Z^E = \frac{4\pi}{15} (\delta^{AB} \delta^{DE} + \delta^{AD}\delta^{BE} + \delta^{AE}\delta^{BD}).
\end{align}
\end{proof}  
\begin{prop}
If $P_{AB}(x) = p_{AB}x^2.$ Then $E^{(-2)} = \frac{1}{20} \sum_{A,B} p_{AB}p_{AB}$.
\end{prop}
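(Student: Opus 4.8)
The plan is to specialize Theorem \ref{main_theorem} to $P_{AB}(x)=p_{AB}x^2$ and evaluate the three contributions to $E^{(-2)}$ one at a time. A convenient first observation is that here $F$ and the $Q_A$ vanish, so the mass and momentum pieces disappear: $\tau^{(-1)}_1=0$, and in Lemma \ref{h_0-h} only the $P_{AB}$- and $\mathcal A_{AB}$-terms survive. Thus the two preparatory tasks are to produce the tensors $\mathcal A_{AB}(Z,s)$ and $\mathcal B_{AB}(Z)$ solving \eqref{lin iso emb ode} and \eqref{ODE_tau} for this particular $P_{AB}$, and then to assemble them into $h_0^{(-1)}-h^{(-1)}$ and into the two $S^2$-integrals.

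I would first solve the two ODEs, which are now explicit. Since $P''_{AB}(sZ)=2p_{AB}$, $P'_{AB}(sZ)=2sZ\,p_{AB}$ and $P_{AB}(sZ)=s^2Z^2p_{AB}$, the right-hand side of \eqref{lin iso emb ode} collapses to $-s^3p_{AB}(1+3Z^2)$; inserting the ansatz $\mathcal A_{AB}=s^3p_{AB}(a+bZ^2)$ and matching the constant and $Z^2$ coefficients gives $\mathcal A_{AB}(Z,s)=s^3p_{AB}\big(\tfrac13+\tfrac16 Z^2\big)$. For \eqref{ODE_tau} the right-hand side is $-20\,p_{AB}Z$, and since $LZ=-12Z$ forces $L(L+2)Z=120Z$, I obtain $\mathcal B_{AB}(Z)=-\tfrac16 p_{AB}Z$. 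One should check that no regular homogeneous piece must be added: the scalar operator in \eqref{lin iso emb ode} has only singular kernel, and $ZW$ with $W:=p_{AB}Z^AZ^B$ is a pure $\ell=3$ harmonic, so there is no $\ell\le1$ ambiguity to fix in $\mathcal B_{AB}$.

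The genuinely delicate step is the determinant term $\int_{B^3}\det(h_0^{(-1)}-h^{(-1)})$. Substituting $\mathcal A_{AB}$ into Lemma \ref{h_0-h}, with the $F,Q$ terms absent, $h_0^{(-1)}-h^{(-1)}$ reduces to a fixed combination of the four structures $Z_aZ_b$, $(Z_aZ^A_b+Z_bZ^A_a)$, $\tilde\sigma_{ab}$ and $Z^A_aZ^B_b$, with coefficients that are polynomials in $Z$ times $p_{AB}$ and an overall factor $s^3$. I would then form $\det=\tfrac12\big[(\mathrm{tr}(h_0^{(-1)}-h^{(-1)}))^2-|h_0^{(-1)}-h^{(-1)}|^2\big]$ using the round-sphere contraction identities $Z_aZ^a=1-Z^2$, $Z_aZ^{Aa}=-ZZ^A$ and $Z^A_aZ^{Ba}=\delta^{AB}-Z^AZ^B$, which turn each contraction into a polynomial in $Z$ times $W$, $\sum_{A,B}p_{AB}p_{AB}$, or $(p^2)_{BD}Z^BZ^D$. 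A useful simplification is that for a $2\times2$ traceless symmetric $p$, Cayley–Hamilton gives $(p^2)_{BD}=\sum_C p_{BC}p_{CD}=\tfrac12\big(\sum_{A,B}p_{AB}p_{AB}\big)\delta_{BD}$, so that $(p^2)_{BD}Z^BZ^D=\tfrac12(\sum p^2)(1-Z^2)$. The main obstacle is bookkeeping the roughly ten pairwise contractions in $|h_0^{(-1)}-h^{(-1)}|^2$ without error; I expect the $W^2$-contributions to cancel between $(\mathrm{tr})^2$ and $|\cdot|^2$, leaving $\det$ as a polynomial in $Z$ times $\sum p^2$, which is then integrated against the ball measure.

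Finally I would evaluate the remaining integrals, which are routine substitutions into Theorem \ref{main_theorem}. The first term $\int_{B^3}\tfrac18\sum P'_{AB}P'_{AB}=\tfrac12\sum p^2\int_{B^3}s^2Z^2$ is elementary once one uses the ball measure and $\int_{S^2}Z^2=\tfrac{4\pi}{3}$. The last term uses $\mathcal B_{AB}=-\tfrac16 p_{AB}Z$ together with the sixth-moment identity $\int_{S^2}Z^2Z^AZ^BZ^CZ^D=\tfrac{4\pi}{105}(\delta^{AB}\delta^{CD}+\delta^{AC}\delta^{BD}+\delta^{AD}\delta^{BC})$, valid because both factors of $Z$ must be paired with each other, giving $\int_{S^2}Z^2W^2=\tfrac{8\pi}{105}\sum p^2$; the determinant integral needs only the second- and fourth-moment identities, the latter being \eqref{integral}. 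Collecting the three pieces and restoring the overall $\tfrac{1}{8\pi}$ factor then yields $E^{(-2)}=\tfrac{1}{20}\sum_{A,B}p_{AB}p_{AB}$. Everything outside the determinant is a direct computation, so the determinant evaluation is where I would concentrate the care.
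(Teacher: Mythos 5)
Your proposal is correct and follows essentially the same route as the paper: you obtain the same solutions $\mathcal{A}_{AB}=s^3p_{AB}\bigl(\tfrac13+\tfrac16 Z^2\bigr)$ and $\mathcal{B}_{AB}=-\tfrac16 p_{AB}Z$ of \eqref{lin iso emb ode} and \eqref{ODE_tau}, assemble $h_0^{(-1)}-h^{(-1)}$ from Lemma \ref{h_0-h}, and evaluate the volume and surface integrals with the same fourth- and sixth-moment identities (including the $\tfrac{1}{8\pi}$ normalization needed for $\tfrac{1}{20}\sum p_{AB}p_{AB}$). Your additional structural observations — the Cayley–Hamilton reduction $(p^2)_{BD}=\tfrac12\bigl(\sum p_{AB}p_{AB}\bigr)\delta_{BD}$ and the cancellation of the $(p_{AB}Z^AZ^B)^2$ terms between $(\mathrm{tr})^2$ and $|\cdot|^2$ in the determinant — are correct and consistent with the paper's explicit computation.
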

\begin{proof}
One verifies that
\begin{align*}
\mathcal{A}_{AB}(Z,s) &= s^3 \lt( \frac{(Z)^2}{6} + \frac{1}{3} \rt)p_{AB}\\
\mathcal{B}_{AB}(Z) &= -\frac{Z}{6}p_{AB}
\end{align*}
solve \eqref{lin iso emb ode} and \eqref{ODE_tau} respectively. Direct computation shows that \begin{align*}
h_0^{(-1)} - h^{(-1)} = \frac{s^3}{3} \Big( Z^A Z^B \tilde\sigma_{ab} - Z_aZ_b Z^AZ^B + Z \lt( Z_a Z^A_b + Z_b Z^A_a \rt) Z^B - \lt( (Z)^2+2 \rt) Z^A_aZ^B_b \Big)p_{AB}.
\end{align*}
We compute
\begin{align*}
\lt| h_0^{(-1)} - h^{(-1)} \rt|^2_{\tilde\sigma} &= \frac{s^2}{9} \Big( 9\lt( p_{AB} Z^A Z^B \rt)^2 + (2(Z)^2-8) \delta^{AD} Z^B Z^E p_{AB} p_{DE} \\
&\qquad\quad + ((Z)^2+2)^2 \delta^{AD}\delta^{BE} p_{AB} p_{DE} \Big),\\
\mbox{tr}_{\tilde\sigma} \lt( h_0^{(-1)} - h^{(-1)} \rt) &= sZ^A Z^B p_{AB}
\end{align*}
to get
\begin{align*}
\det \lt( h_0^{(-1)} - h^{(-1)} \rt) &= \frac{1}{2} \lt( \mbox{tr}_{\tilde\sigma} \lt( h_0^{(-1)} - h^{(-1)} \rt) - \lt| h_0^{(-1)} - h^{(-1)} \rt|^2_{\tilde\sigma}\rt)\\
&= -\frac{s^2}{18} \Big( \lt( 2(Z)^2-8 \rt) \delta^{AD} Z^B Z^E + \lt( (Z)^2+2 \rt)^2 \delta^{AD}\delta^{BE}  \Big)p_{AB} p_{DE}.
\end{align*}
Denote $|p|^2 = \sum_{A,B} p_{AB}p_{AB}$. The volume integral contributes 
\begin{align*}
\frac{1}{3} \int_{S^2} \lt[ \frac{(Z)^2}{2} |p|^2  + \frac{1}{18} \lt( (2(Z)^2-8) \delta^{AD} Z^B Z^E p_{AB} p_{DE} + ((Z)^2+2)^2 |p|^2 \rt) \rt] = \frac{4\pi}{9}|p|^2 
\end{align*}
and the surface integral contributes
\begin{align*}
\frac{1}{4} \int_{S^2} (Z)^2 (p_{AB} Z^A Z^B)^2 - \frac{10}{3} (Z)^2 Z^D Z^E p_{DE} Z^A Z^B p_{AB} =- \frac{2\pi}{45}|p|^2, \end{align*}
where we used the identity $\int_{S^2} (Z)^2 Z^A Z^B Z^D Z^E = \frac{4\pi}{105}(\delta^{AB} \delta^{DE} + \delta^{AD}\delta^{BE} + \delta^{AE}\delta^{BD})$.
\end{proof}

\end{document}